\documentclass[twocolumn]{autart}    
\usepackage{svg}
\usepackage{booktabs}
\usepackage{graphicx}          
\usepackage{amsmath}
\usepackage{amssymb}
\usepackage{algorithm}      
\usepackage{algpseudocode}  

\newtheorem{assumption}{Assumption}
\newtheorem{lemma}{Lemma}[section]
\newtheorem{theorem}[lemma]{Theorem}
\newtheorem{remark}[lemma]{Remark}

\makeatletter
\def\ps@copyright{%
	\let\@mkboth\@gobbletwo
	\def\@oddhead{}%
	\let\@evenhead\@oddhead
	\def\@oddfoot{\hfil\thepage\hfil}%
	\let\@evenfoot\@oddfoot
}
\makeatother

\begin{document}

\begin{frontmatter}

\title{PANDA: A Matrix-Free Differentiable NMPC Solver \\ via Proximal Averaged Quasi-Newton with \\ Adaptive Linesearch Algorithm} 
\vspace{-6mm}


\thanks[t2]{Corresponding author.}
\author[jlu]{Yuankun Chen}\ead{chenyk24@mails.jlu.edu.cn},
\author[jlu]{Zifei Nie\thanksref{t2}}\ead{zifei\_nie@jlu.edu.cn},
\author[jlu]{Xun Gong}\ead{xungong@jlu.edu.cn},
\author[jluc]{Yunfeng Hu}\ead{huyf@jlu.edu.cn},
\author[tj]{Hong Chen}\ead{chenhong2019@tongji.edu.cn},

\address[jlu]{School of Artificial Intelligence, Jilin University, Changchun, China.}  
\address[jluc]{Department of Control Science and Engineering, Jilin University, Changchun, China.}
\address[tj]{College of Electronics and Information Engineering, Tongji University, Shanghai, China.}
\vspace{-8mm}

\begin{keyword}                           
Differentiable NMPC, Matrix-free Solver, Nonconvex Optimization, Proximal Gradient, Adaptive Linesearch;            
\end{keyword}                             
\begin{abstract}                          
Differentiable nonlinear model predictive control (NMPC) provides a principled way to embed optimal control structure into end-to-end learning paradigms, but its practical use is often limited by the computational and memory costs of both forward optimization and backward sensitivity propagation. This brief proposes \texttt{PANDA}, a matrix-free solver for differentiable NMPC. In the forward pass, \texttt{PANDA} combines proximal-gradient iterations with quasi-Newton acceleration and introduces an adaptive stepsize enlargement mechanism to mitigate the conservativeness of monotone stepsize reduction. The resulting stepsize behavior and its effect on local convergence are theoretically analyzed. In the backward pass, \texttt{PANDA} performs implicit differentiation from the residual equation and computes adjoint sensitivities using Krylov-subspace iterative methods together with automatic-differentiation-based Matrix-Vector product operators, thereby avoiding explicit Hessian and Jacobian construction. The method is evaluated on a nonconvex trailer NMPC problem embedded in an imitation learning task. The results show that \texttt{PANDA} achieves much faster forward and backward computation and lower memory overhead than representative differentiable optimization solvers, while maintaining effective imitation learning performance.

\end{abstract}
\end{frontmatter}

\section{Introduction}

Differentiable optimization provides a principled framework for embedding
optimization-based planning and decision-making into learning pipeline as inductive biases. In particular, differentiable nonlinear model predictive control (NMPC), which combines the physical interpretability and safety guarantees of MPC with the adaptability of learning-based frameworks, has been widely applied in autonomous driving and robotics, across tasks including imitation learning \cite{chen2025gauss}, reinforcement learning \cite{adabag2025differentiable}, and end-to-end planning \cite{huang2023E2E}.
\vspace{-2mm}

However, the practical deployment of differentiable NMPC is still hindered by its high computational cost and memory consumption. In the \textbf{forward pass}, namely, the process of solving the underlying optimization problem to obtain the optimal solution, one needs to handle a nonlinear, constrained, and often nonconvex problem. In the \textbf{backward pass}, namely, the process of computing the derivatives of the optimal solution with respect to the learnable parameters, one typically needs to perform time-consuming operations such as solving linear systems. As a result, compared with standard neural network layers, efficient computation of differentiable NMPC layer remains a major challenge.
\vspace{-2mm}

For the \textbf{forward pass}, existing methods can be broadly categorized into first-order and second-order approaches. First-order methods are attractive for their lightweight computational structure, but they often exhibit limited convergence speed. Second-order methods usually achieve faster convergence and are therefore widely adopted in practical solvers such as CasADi\cite{andersson2018casadi} and acados\cite{verschueren2022acados}, although this often comes at the price of higher memory cost. To balance these trade-offs, \cite{stella2017simple,themelis2018forward} combines the low-cost structure of first-order iterations with second-order acceleration mechanisms, thereby achieving high computational efficiency. Building on this line of work, \cite{de2022proximal} further imposes stricter stepsize restrictions based on local Lipschitz properties. However, such stepsize rules remain monotone decreasing, which may lead to overly conservative solution strategies and potentially slow down convergence in practice.
\vspace{-2mm}

For the \textbf{backward pass}, most existing methods start from the Karush--Kuhn--Tucker (KKT) conditions and apply the implicit function theorem to derive a linearized sensitivity system. Based on this system, some approaches construct auxiliary systems to compute the required sensitivities~\cite{jin2021safe,amos2018differentiable}, while others accelerate the computation by exploiting specific structures~\cite{chen2025gauss}. Despite these differences, such methods typically rely on the explicit construction of Hessian or Jacobian matrices. As a result, their memory cost often scales quadratically with the problem size, which can become a major bottleneck for long-horizon optimal control problems and for deployment on memory-constrained platforms.
\vspace{-2mm}

To address these challenges, we propose \texttt{PANDA}, i.e., a \textbf{P}roximal \textbf{A}veraged quasi-\textbf{N}ewton with a\textbf{D}aptive Linesearch \textbf{A}lgorithm. In the \textbf{forward pass}, \texttt{PANDA} revisits local Lipschitz information and allows the admissible stepsize to increase, unlike existing monotone decreasing rules. In the \textbf{backward pass}, \texttt{PANDA} focuses on Matrix-Vector product operators rather than explicit matrices, reducing the computational and memory burden of sensitivity propagation.
The main contributions are summarized as follows.
\begin{itemize}
	\item We propose \texttt{PANDA}\footnote{Code available: {https://github.com/optiXlab1/PANDA}.}, a matrix-free solver for differentiable NMPC, which improves the efficiency of both forward optimization and backward sensitivity propagation within a unified framework.
	
	\item For the forward pass, we extend the proximal quasi-Newton framework with an adaptive stepsize enlargement mechanism that mitigates the conservativeness of monotone stepsize reduction. The mechanism and applicability of this strategy are analyzed from both theoretical and practical
	perspectives.
	
	\item For the backward pass, we derive a residual-based implicit differentiation scheme and compute the adjoint sensitivity in a matrix-free manner using Krylov-subspace iterative methods and automatic-differentiation-enabled Jacobian-Vector or Vector-Jacobian product operators.
	
	\item We evaluate the proposed \texttt{PANDA} on a nonconvex trailer NMPC problem embedded in an imitation learning task. The results demonstrate faster forward and backward computation and lower memory overhead than several popular baselines, while maintaining comparable imitation learning performance.
\end{itemize}
\vspace{-2mm}
\section{Problem Setting and Preliminaries}
\vspace{-2mm}
Consider the following parameterized finite-horizon optimal control problem:
\begin{equation}
	\scalebox{0.95}{$
		\begin{aligned}
			\min_{\{x_k,u_k\}} \quad
			& \sum_{k=0}^{N_p-1} \ell_k(x_k,u_k;\theta)
			+ g_k(u_k;\theta)
			+ \ell_{N_p}(x_{N_p};\theta) \\
			\text{s.t.}\quad
			& x_0 = \bar x, \\
			& x_{k+1} = f_k(x_k,u_k;\theta), \quad k=0,\dots,N_p-1,
		\end{aligned}
		$}
	\label{eq:ocp}
\end{equation}
where \(x_k\in\mathbb{R}^{n_x}\), \(u_k\in\mathbb{R}^{n_u}\), \(N_p\) is the prediction horizon, and \(\theta\in\mathbb{R}^p\) denotes a vector of tunable parameters. Here, \(\ell_k\) and \(\ell_{N_p}\) are smooth cost functions that may also include soft constraints through penalty terms, while \(g_k\) is a possibly nonsmooth term and can be used to encode hard constraints or other structured regularization.
\vspace{-2mm}

Problems of the form \eqref{eq:ocp} commonly arise in practical control applications with tunable design parameters. Beyond solving \eqref{eq:ocp} for a given parameter \(\theta\), one often seeks to adjust \(\theta\) so that the resulting optimal trajectory performs well with respect to a prescribed task level criterion. Let
$L:\mathbb{R}^{N_p n_u}\to\mathbb{R}$
be such a task level loss:
\begin{equation}
	\mathcal{L}(\theta)
	:=
	L\bigl(u_{0:N_p-1}^*(\theta)\bigr),
	\label{eq:outer-loss}
\end{equation}
where \(u^*(\theta)\) denotes a locally optimal solution of \eqref{eq:ocp}, and \(L\) may also include the state trajectory \(x^*(\theta)\), which is represented by \(\bar x\) and \(u^*(\theta)\) via system dynamics.
\subsection{Forward-Backward Splitting}
\vspace{-2mm}
To obtain a control-only formulation of \eqref{eq:ocp}, we condense the state
trajectory by forward simulation. Define the control stack:
\begin{equation}
	u:=\mathrm{col}(u_0,\dots,u_{N_p-1})\in\mathbb{R}^{N_p n_u},
\end{equation}
and let $\boldsymbol{F}_{k,\theta}(u)$ denote the state at stage $k$ obtained by forward simulation, namely
\begin{equation}
	\boldsymbol{F}_{0,\theta}(u):=\bar x(\theta), \;
	\boldsymbol{F}_{k+1,\theta}(u):=f_k(\boldsymbol{F}_{k,\theta}(u),u_k;\theta).
\end{equation}
The inner problem can then be rewritten in the composite form:
\begin{equation}
	\min_{u\in\mathbb{R}^{N_p n_u}}
	\;\varphi_\theta(u)
	:=
	\ell_\theta(u)+g_\theta(u),
	\label{eq:composite-problem}
\end{equation}
with
\begin{align*}
	\ell_\theta(u)
	&:=
	\sum_{k=0}^{N_p-1}\ell_k(\boldsymbol{F}_{k,\theta}(u),u_k;\theta)
	+\ell_{N_p}(\boldsymbol{F}_{N_p,\theta}(u);\theta), \\
	g_\theta(u)
	&:=
	\sum_{k=0}^{N_p-1} g_k(u_k;\theta).
\end{align*}
For simplicity, we suppress the dependence on \(\theta\) whenever no ambiguity arises, and write
\(\varphi\), \(\ell\), \(g\), \(T_\gamma\), \(R_\gamma\), \(\varphi_\gamma\) in place of
\(\varphi_\theta\), \(\ell_\theta\), \(g_\theta\), \(T_{\gamma,\theta}\), \(R_{\gamma,\theta}\), \(\varphi_{\gamma,\theta}\), respectively.
\vspace{-2mm}

For a proper function $g:\mathbb{R}^m\to\overline{\mathbb{R}}$ and a stepsize $\gamma>0$, the proximal mapping is
\begin{equation}
	\operatorname{prox}_{\gamma g}(u)
	:=
	\arg\min_{w\in\mathbb{R}^m}
	\left\{
	g(w)+\frac{1}{2\gamma}\|w-u\|^2
	\right\}.
	\label{eq:prox-def}
\end{equation}
When $g$ is nonconvex, this mapping may be set-valued. We say that $g$ is prox-bounded if there exists
$\gamma_g\in(0,+\infty]$ such that $\operatorname{prox}_{\gamma g}(v)\neq\emptyset$ for all
$v\in\mathbb{R}^m$ and all $\gamma\in(0,\gamma_g)$; the value $\gamma_g$ is called the
prox-boundedness threshold.
\vspace{-2mm}
\begin{assumption}[Basic assumptions]
	\label{ass:basic}
	For the composite problem \eqref{eq:composite-problem}, and for each parameter value $\theta$ of interest, the following hold:
	\vspace{-2mm}
	\begin{enumerate}
		\item[\textup{\textbf{A1}}] $\ell(u)$ is continuously differentiable and $\nabla_u \ell$ is locally Lipschitz continuous.
		\item[\textup{\textbf{A2}}] $g(u)$ is proper, closed, and $\gamma_g$-prox-bounded;
		\item[\textup{\textbf{A3}}] $\varphi(u)$ admits at least one local solution.
	\end{enumerate}
\end{assumption}

Under Assumption~\ref{ass:basic}, both the gradient step on $\ell$ and the proximal step associated with $g$ are well defined. We may therefore introduce the forward-backward splitting (FBS) mapping associated with \eqref{eq:composite-problem} as
\begin{equation}
	\bar u \in T_\gamma(u)
	:=
	\operatorname{prox}_{\gamma g}\bigl(u-\gamma\nabla \ell(u)\bigr),
	\label{eq:fbs-step}
\end{equation}
namely a gradient step on the smooth part $\ell$ followed by a proximal correction induced by $g$.
In most control applications considered in this paper, the nonsmooth term $g$ is in fact chosen as
an indicator function that encodes hard constraints on the control input.
When \(g\) is the indicator function, i.e., \(g(u)=\delta_{\mathcal U}(u)\), where \(\delta_{\mathcal U}(u)=0\) if \(u\in\mathcal U\) and \(\delta_{\mathcal U}(u)=+\infty\) otherwise, the proximal mapping reduces to the Euclidean projection, and \eqref{eq:fbs-step} becomes the classical projected-gradient step.
\vspace{-2mm}

Associated with the selected FBS point $\bar u$ is the forward-backward residual:
\begin{equation}
	R_\gamma(u)
	:=
	\gamma^{-1}(u-\bar u).
	\label{eq:residual}
\end{equation}
A point $u^*$ is called
\emph{$\gamma$-critical} if $0\in R_\gamma(u^*)$, equivalently $u^*\in T_\gamma(u^*)$.
A key property of FBS is the following sufficient decrease estimate.
\vspace{-2mm}

\begin{lemma}[Sufficient decrease]
	\label{lem:sufficient-decrease}
	For any \(u\) and any \(\bar u \in T_\gamma(u)\), when 
	\(\gamma \in (0,\min\{\gamma_g,1/L_\ell\})\),
	\begin{equation}
		\varphi(u)-\varphi(\bar u)
		\ge
		\frac{\gamma(1-\gamma L_\ell)}{2}\|R_\gamma(u)\|^2 .
		\label{eq:sufdec}
	\end{equation}
\end{lemma}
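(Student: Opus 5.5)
The plan is the standard two-inequality argument for forward-backward splitting: combine the variational characterization of $\bar u$ as a proximal minimizer with the descent lemma for $\ell$. Here $L_\ell$ denotes a Lipschitz constant of $\nabla\ell$ valid on a set containing $u$ and $\bar u$. Under \textbf{A1} this constant is only local, so the estimate is understood on a region, for example a bounded sublevel set together with its forward-backward image, where $L_\ell$ bounds the gradient variation.

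\textbf{Step 1 (minimality of the prox).} Since $\bar u\in\operatorname{prox}_{\gamma g}(u-\gamma\nabla\ell(u))$ and $\gamma<\gamma_g$, the minimizer exists by \textbf{A2}. Comparing the value of the prox objective at $\bar u$ with its value at $w=u$ gives
\[
g(\bar u)+\tfrac{1}{2\gamma}\|\bar u-u+\gamma\nabla\ell(u)\|^2\le g(u)+\tfrac{1}{2\gamma}\|\gamma\nabla\ell(u)\|^2 .
\]
Expanding the square and cancelling the common term leaves
\[
g(\bar u)\le g(u)-\langle\nabla\ell(u),\bar u-u\rangle-\tfrac{1}{2\gamma}\|\bar u-u\|^2 .
\]
No convexity of $g$ is needed, because only global minimality of $\bar u$ is used.

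\textbf{Step 2 (descent lemma).} Lipschitz continuity of $\nabla\ell$ with constant $L_\ell$ on the segment $[u,\bar u]$ yields
\[
\ell(\bar u)\le\ell(u)+\langle\nabla\ell(u),\bar u-u\rangle+\tfrac{L_\ell}{2}\|\bar u-u\|^2 .
\]

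\textbf{Step 3 (combine).} Adding the two inequalities cancels the linear terms:
\[
\varphi(\bar u)\le\varphi(u)-\tfrac{1-\gamma L_\ell}{2\gamma}\|\bar u-u\|^2 .
\]
Substituting $\|\bar u-u\|=\gamma\|R_\gamma(u)\|$ from \eqref{eq:residual} gives exactly \eqref{eq:sufdec}. The condition $\gamma<1/L_\ell$ ensures the right-hand side is a genuine decrease.

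The main obstacle is Step 2, because \textbf{A1} gives only local Lipschitz continuity. I would handle it by stating that $L_\ell$ is a Lipschitz modulus of $\nabla\ell$ on a convex set containing both $u$ and $\bar u$. In practice this is the same locally estimated constant that the later linesearch verifies, so the lemma holds whenever such a constant is valid on the segment. Everything else is routine algebra.
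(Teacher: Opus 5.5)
Your argument is correct. It is the standard derivation: test prox-minimality of $\bar u$ against $w=u$, then add the descent lemma so that the linear terms cancel. The paper gives no proof of its own for this lemma and treats it as a known property from the FBE literature it cites, where this same argument is used.

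Your reading of $L_\ell$ as a Lipschitz modulus valid on the segment $[u,\bar u]$ also matches how the paper handles the local-Lipschitz issue in practice. The upper-model test \eqref{eq:panda_upper_model} with $\alpha=\gamma L_\ell$ is exactly your Step~2 inequality. Using it directly gives the decrease constant $\gamma(1-\alpha)/2$ that appears in the linesearch of Algorithm~\ref{alg:panda}.
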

\subsection{Newton-Type Method}
\vspace{-2mm}
Although FBS provides a robust descent mechanism, its convergence speed is inherently limited. To further accelerate local convergence, a Newton-type acceleration is incorporated.
\vspace{-2mm}

Under Assumption~\ref{ass:basic}, for any $\gamma\in(0,\gamma_g)$, every $\gamma$-critical point is
a stationary point of \eqref{eq:composite-problem}~\cite[Prob~3.5]{themelis2018forward}. Thus, the solver can target the fixed-point residual equation:
\begin{equation}
	0 \in R_\gamma(u),
	\label{eq:stationarity-root}
\end{equation}
which motivates Newton iterations of the form:
\begin{equation}
	u^{k+1}=u^k-H_k R_\gamma(u^k),
	\label{eq:newton-type}
\end{equation}
where \(H_k\) serves as an inverse-Jacobian approximation.
The above iteration relies on local second-order regularity of the underlying functions. We therefore impose the following assumptions.
\vspace{-2mm}
\begin{assumption}[Local regularity]
	\label{ass:local}
	Let $u^*$ be a $\gamma$-critical point of \eqref{eq:composite-problem}. We assume:
	\vspace{-2mm}
	\begin{enumerate}
		\item[\textup{\textbf{B1}}] $\nabla_{uu}^2 \ell$ and $\nabla_{u\theta}^2 \ell$ exist and are continuous in a neighborhood of $u^*$;
		\item[\textup{\textbf{B2}}] $g$ is prox-regular and $C^2$-partly smooth at $u^*$ with respect to a locally identified active manifold, and the strict complementarity condition holds for $-\nabla_u \ell$.
		\item[\textup{\textbf{B3}}] $u^*$ is a strong local minimum of \eqref{eq:composite-problem}.
	\end{enumerate}
\end{assumption}

For the definitions of prox-regular, see, e.g., \cite[Defs.~13.6]{rockafellar1998variational}.  
Assumptions \textbf{B1}--\textbf{B2} ensure the local existence of \(\nabla_u R_\gamma\), so that the Newton-type step in \eqref{eq:newton-type} is well defined. In practice, \(H_k\) is often constructed by a quasi-Newton update such as limited-memory BFGS (L-BFGS), which is attractive due to its low memory footprint. However, although Newton-type methods typically enjoy fast local convergence, they cannot in general guarantee convergence when the iterate is still far from the solution.

\subsection{Forward-Backward Envelope}
\vspace{-2mm}
To ensure global convergence of the Newton-type acceleration, we introduce the forward-backward envelope (FBE), defined as
\begin{equation}
	\varphi_\gamma(u)
	:=
	\min_{w}
	\Bigl\{
	\ell(u)
	+
	\langle\nabla \ell(u),\,w-u\rangle
	+
	g(w)
	+
	\frac{1}{2\gamma}\|w-u\|^2
	\Bigr\}.
	\label{eq:fbe-def}
\end{equation}
It can be viewed as a min-max formulation built upon a linearization and a quadratic approximation of the smooth part. Its basic properties are as follows:
\setcounter{lemma}{1}
\vspace{-2mm}
\begin{lemma}[Basic properties of the FBE]
	For any $\gamma \in (0,\gamma_g)$, $\varphi_\gamma$ is real-valued and continuous, and the following statements hold:
	\vspace{-2mm}
	\begin{enumerate}
		\item[\textup{(i)}] $\varphi_\gamma(\bar u) \le \varphi(\bar u)$.
		\item[\textup{(ii)}] $\varphi(\bar u) \le \varphi_\gamma(u) - \frac{\gamma(1-\gamma L_\ell)}{2}\|R_\gamma(u)\|^2$ for $\bar u \in T_\gamma(u)$.
		\item[\textup{(iii)}] $\arg\min \varphi=\arg\min \varphi_\gamma$ for $\gamma < 1/L_\ell$.
	\end{enumerate}
\end{lemma}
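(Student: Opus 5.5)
The plan is to work directly from the variational definition \eqref{eq:fbe-def} and observe that the minimum there is attained exactly at the forward-backward points. Completing the square,
\[
\langle\nabla \ell(u),w-u\rangle+\tfrac{1}{2\gamma}\|w-u\|^2=\tfrac{1}{2\gamma}\|w-(u-\gamma\nabla\ell(u))\|^2-\tfrac{\gamma}{2}\|\nabla\ell(u)\|^2 .
\]
Hence the minimization in \eqref{eq:fbe-def} is a Moreau envelope of $g$ evaluated at $u-\gamma\nabla\ell(u)$, shifted by $\ell(u)-\tfrac{\gamma}{2}\|\nabla\ell(u)\|^2$. Its minimizers are exactly $T_\gamma(u)$ from \eqref{eq:fbs-step}.

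Real-valuedness then follows from A2. For $\gamma<\gamma_g$ the prox is nonempty, and $g$ is proper, so the envelope is finite. Continuity follows from two facts: the Moreau envelope of a proper, closed, prox-bounded function is continuous for $\gamma<\gamma_g$ (Rockafellar--Wets, Thm.~1.25), and $\ell$ and $\nabla\ell$ are continuous by A1. Plugging any $\bar u\in T_\gamma(u)$ into the objective of \eqref{eq:fbe-def} gives the closed form
\[
\varphi_\gamma(u)=\ell(u)+\langle\nabla\ell(u),\bar u-u\rangle+g(\bar u)+\tfrac{\gamma}{2}\|R_\gamma(u)\|^2 .
\]

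For (i), the claim as written compares $\varphi_\gamma$ and $\varphi$ at the same point, so the natural inequality is $\varphi_\gamma(v)\le\varphi(v)$ for every $v$. I will prove it by choosing $w=v$ in \eqref{eq:fbe-def}, which gives $\varphi_\gamma(v)\le\ell(v)+g(v)$; applying this at $v=\bar u$ yields (i).

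For (ii), I apply the descent lemma $\ell(\bar u)\le\ell(u)+\langle\nabla\ell(u),\bar u-u\rangle+\tfrac{L_\ell}{2}\|\bar u-u\|^2$. Adding $g(\bar u)$ and substituting the closed form gives
\[
\varphi(\bar u)\le\varphi_\gamma(u)-\tfrac{1}{2\gamma}\|\bar u-u\|^2+\tfrac{L_\ell}{2}\|\bar u-u\|^2 .
\]
Since $\bar u-u=-\gamma R_\gamma(u)$, the right-hand side equals $\varphi_\gamma(u)-\tfrac{\gamma(1-\gamma L_\ell)}{2}\|R_\gamma(u)\|^2$.

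\textbf{Main obstacle.} Because A1 only gives local Lipschitz continuity, $L_\ell$ must be read as a Lipschitz constant of $\nabla\ell$ on a set containing the segment $[u,\bar u]$, as is implicit in Lemma~\ref{lem:sufficient-decrease}. I will state this convention explicitly.

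For (iii), assume $\gamma<\min\{\gamma_g,1/L_\ell\}$ and let $u^*\in\arg\min\varphi$. Combining (i) and (ii) gives, for every $u$,
\[
\inf\varphi\le\varphi(\bar u)\le\varphi_\gamma(u)-c\|R_\gamma(u)\|^2\le\varphi(u),\qquad c=\tfrac{\gamma(1-\gamma L_\ell)}{2}>0 .
\]
Hence $\inf\varphi_\gamma=\inf\varphi$. For the forward inclusion, a global minimizer $u^*$ is $\gamma$-critical for $\gamma<1/L_\ell$. Indeed, the chain at $u=u^*$ gives $\varphi(u^*)\le\varphi(\bar u)\le\varphi(u^*)-c\|R_\gamma(u^*)\|^2$, which forces $R_\gamma(u^*)=0$. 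Therefore $u^*\in T_\gamma(u^*)$, the closed form gives $\varphi_\gamma(u^*)=\varphi(u^*)=\inf\varphi_\gamma$, and so $u^*\in\arg\min\varphi_\gamma$.

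For the reverse inclusion, take $u\in\arg\min\varphi_\gamma$ and the same chain. Equality $\varphi_\gamma(u)=\inf\varphi$ forces $c\|R_\gamma(u)\|^2\le 0$. So $u=\bar u$, and then $\varphi(u)=\varphi(\bar u)\le\varphi_\gamma(u)=\inf\varphi$, which gives $u\in\arg\min\varphi$.

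The global use of $L_\ell$ in (iii) again needs either a global Lipschitz assumption or a restriction to a sublevel set on which one constant holds. I would flag this as the delicate point and otherwise cite \cite{themelis2018forward} for the standard argument.
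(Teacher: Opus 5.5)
Your proof is correct. The paper gives no proof of this lemma and simply imports it from \cite{themelis2018forward}; your argument is essentially the standard one used there: the Moreau-envelope rewriting with minimizers $T_\gamma(u)$, the choice $w=v$ for (i), the descent lemma plus the closed form for (ii), and the sandwich $\inf\varphi\le\varphi(\bar u)\le\varphi_\gamma(u)-c\|R_\gamma(u)\|^2\le\varphi(u)$ for (iii). Your caveat on $L_\ell$ is more than cosmetic: under only the local Lipschitz continuity of \textbf{A1}, (iii) fails for every $\gamma>0$ (e.g.\ $\ell(u)=u^4$, $g\equiv 0$ gives $\varphi_\gamma(u)=u^4-8\gamma u^6$, which is unbounded below), so $L_\ell$ must be a global Lipschitz constant for the global statement, and restricting to a sublevel set of $\varphi$ would only give a localized version.
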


The continuity of \(\varphi_\gamma\) and the fact that it shares the same minimizers as \(\varphi\) make the FBE a suitable merit function for line search. Moreover, under Assumption~2, the FBE also admits useful local second-order structure. Define
\begin{equation}
	\scalebox{1.0}{$
		\begin{aligned}
			Q_\gamma := I - \gamma \nabla^2 \ell(u^*), \quad
			P_\gamma := J_u \operatorname{prox}_{\gamma g}
			\bigl(u^\star-\gamma\nabla\ell(u^\star)\bigr),\\
			\nabla_{u} R_\gamma := \frac{I - P_\gamma Q_\gamma}{\gamma}, \quad
			H_\gamma := \nabla^2 \varphi_\gamma(u^*) = Q_\gamma \, \nabla_{u} R_\gamma.
		\end{aligned}
		$}
	\label{eq:second}
\end{equation}
For $\gamma \in (0,1/L_\ell)$, $Q_\gamma$ and $H_\gamma$ are symmetric positive definite, $P_\gamma$ is symmetric positive semidefinite, and $\nabla_u R_\gamma$ is nonsingular~\cite[Thms.~4.10--4.11]{themelis2018forward}. These properties will be useful in the convergence analysis.

\subsection{Implicit differentiation of the solution map}
\vspace{-2mm}
In the backward pass, the objective is to compute the gradient of the outer loss in \eqref{eq:outer-loss} with respect to \(\theta\). By the chain rule:
\begin{equation}
	\nabla_\theta \mathcal L(\theta)
	=
	\nabla_u L(u^*)\,
	\frac{\partial u^*(\theta)}{\partial \theta},
	\label{eq:bp-chain}
\end{equation}
where the central quantity is the sensitivity \(\partial u^*(\theta)/\partial\theta\). Rather than invoking the implicit function theorem from the conventional KKT perspective, we proceed from the critical-point characterization \eqref{eq:stationarity-root}. Under Assumptions~\textbf{B1}-\textbf{B2}, the residual map \(R_{\gamma,\theta}(u)\) is continuously differentiable in a neighborhood of \((u^*,\theta)\), and under Assumption~\textbf{B3}, \(\nabla_u R_{\gamma,\theta}(u^*)\) is nonsingular. Therefore, by the implicit function theorem,
\begin{equation}
	\frac{\partial u^*(\theta)}{\partial\theta}
	=
	-\bigl(\nabla_u R_{\gamma,\theta}(u^*)\bigr)^{-1}
	\nabla_\theta R_{\gamma,\theta}(u^*).
	\label{eq:du-dtheta}
\end{equation}
The efficient evaluation of \eqref{eq:du-dtheta} and \eqref{eq:bp-chain} is the central issue which will be addressed in Section~3.2.
\vspace{-2mm}
\section{PANDA: A Matrix-free Solver for Differentiable NMPC}

Building on the above preliminaries, we next develop \texttt{PANDA}, i.e., a \emph{Proximal Averaged quasi-Newton with aDaptive linesearch Algorithm}, for efficient computations of both the forward and backward passes in differentiable optimization, as detailed below.
\subsection{Forward pass of PANDA}
\vspace{-2mm}
Before presenting the algorithm, we emphasize that most of the preceding properties hold under $\gamma\in(0,1/L_\ell)$, which is enforced through the following condition:
\begin{equation}
	\ell(\bar u)
	\le
	\ell(u)
	+
	\langle \nabla \ell(u),\, \bar u-u\rangle
	+
	\frac{\alpha}{2\gamma}
	\|\bar u-u\|^2,
	\label{eq:panda_upper_model}
\end{equation}
where $\alpha$ serves as the scaling factor in $\gamma=\alpha/L_\ell$.
The resulting forward pass is given in Algorithm~\ref{alg:panda}.

\begin{algorithm}
	\caption{Forward pass of PANDA}
	\label{alg:panda}
	\begin{algorithmic}[1]
		\Require $u^0\in\mathbb{R}^{N_p n_u}$, $\gamma_0\in(0,1/L_\ell)$, $\alpha,\beta\in(0,1)$, $N_\gamma>0$, $\varepsilon_r>0$, $\gamma_{max}>0$
		\State Initialize $c_\gamma\gets 0$, $\bar u^0\in T_{\gamma_0}(u^0)$ and $\Phi_0\gets \varphi_{\gamma_0}(u^0)$
		\For{$k=1,2,\ldots$}
		\State $\gamma_k\gets \gamma_{k-1}$
		\If{$c_\gamma\ge N_\gamma$ \textbf{and} $\|R_{\gamma_{k-1}}(u^{k-1})\|>\varepsilon_r$}
		\State Compute $\hat u^{k-1} \in T_{2\gamma_k}(u^{k-1})$
		\While{$(u^{k-1}, \hat u^{k-1}, 2\gamma_k)$ satisfies \eqref{eq:panda_upper_model}}
		\State $\gamma_k \gets 2\gamma_k,\quad c_\gamma \gets 0,\quad$ update $\hat u^{k-1}$
		\State \textbf{if} $\gamma_k > \gamma_{\max}$ \textbf{break}
		\EndWhile
		\EndIf
		\State Select a quasi-Newton direction $d^k$ such that $\|d^k\|\le D\|\bar u^{k-1}-u^{k-1}\|$ and set $\tau_k\gets 1$
		\State $u^k\gets (1-\tau_k)\bar u^{k-1}+\tau_k(u^{k-1}+d^k)$
		\State Compute $\bar u^k\in T_{\gamma_k}(u^k)$ and $\Phi_k\gets \varphi_{\gamma_k}(u^k)$
		\If{$(u^k,\bar u^k,\gamma_k)$ violates \eqref{eq:panda_upper_model}}
		\State $\gamma_k\gets \gamma_k/2,\quad c_\gamma\gets 0$
		\State Go back to line 11
		\EndIf
		\If{$\Phi_k>\Phi_{k-1}-\beta\frac{\gamma_{k-1}(1-\alpha)}{2}\|R_{\gamma_{k-1}}(u^{k-1})\|^2$}
		\State $\tau_k\gets \tau_k/2$
		\State Go back to line 12
		\EndIf
		\State $c_\gamma\gets c_\gamma+1$
		\EndFor
	\end{algorithmic}
\end{algorithm}
\vspace{-2mm}
In Algorithm~\ref{alg:panda}, the method first attempts to enlarge the current stepsize \(\gamma_k\) when the counter condition is met and the residual remains above the tolerance (Steps~4--10). Then, a trial point is generated through the averaged update (Steps~11--12). Steps~13--21 then combine the upper-model condition and the FBE line search to adaptively reduce \(\gamma_k\) and backtrack \(\tau_k\).
Overall, the algorithm combines the tools introduced above into a unified forward scheme. Its main distinction from prior methods lies in the adaptive stepsize enlargement.
\vspace{-2mm}

\begin{remark}[Local Lipschitz property]
	\label{rem:panda-gamma}
	In practical NMPC problems, the local Lipschitz continuity of $\nabla \ell$ is a very mild assumption. Compared with a global Lipschitz constant, a local one better captures the varying curvature of $\ell$ over different regions of the decision space. In particular, when the iterates move from a ``steeper'' region to a ``flatter'' one, the monotone stepsize reduction strategy used in prior works may leave the algorithm with an overly conservative estimate of the local smoothness. The stepsize enlargement mechanism in PANDA is introduced precisely to reduce this mismatch and accelerate the iterations in such situations.
\end{remark}

\begin{figure}[t]
	\centering
	\includegraphics[
	width=0.48\textwidth,
	trim=10pt 16pt 28pt 50pt,
	clip
	]{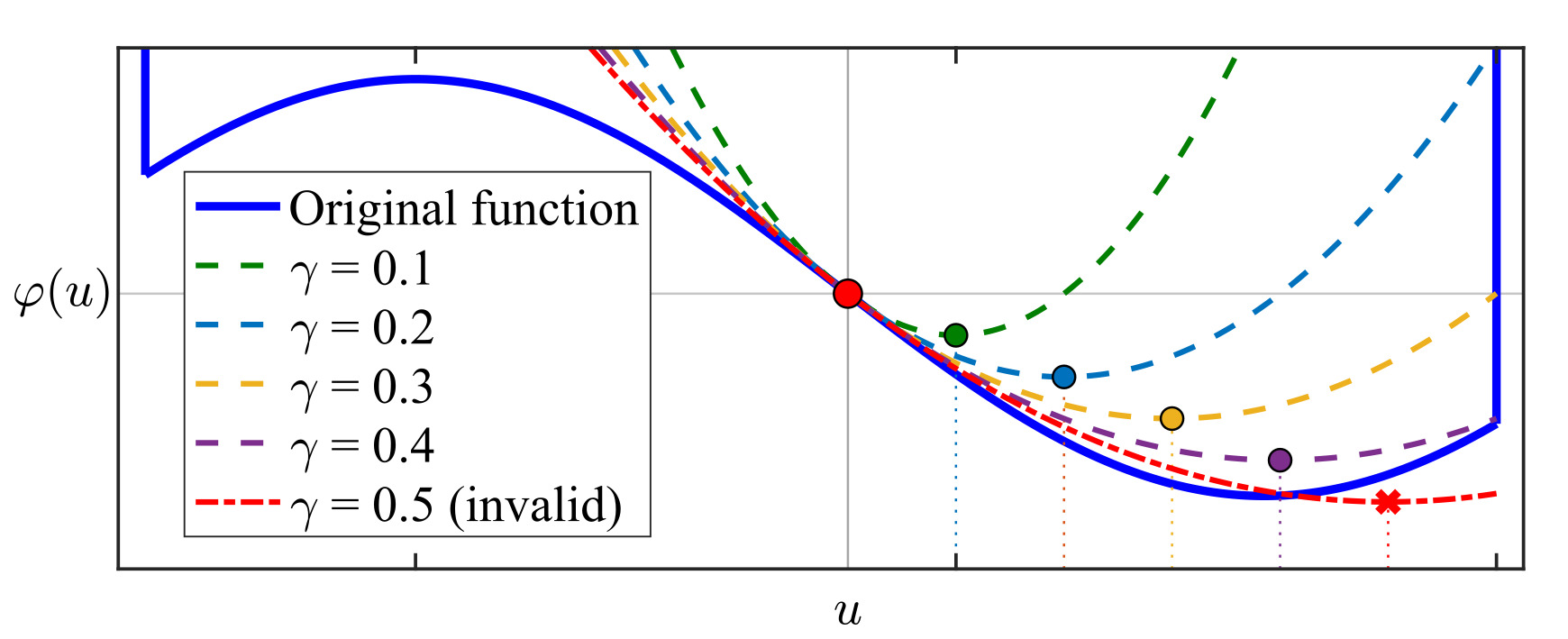}
	\caption{Illustration of the effect of the stepsize $\gamma$ on the FBS update for the example \(f(x)=\cos x\) under box constraints.}
	\label{fig:gamma}
\end{figure}

Figure~\ref{fig:gamma} illustrates the FBS updates induced by different stepsizes. It can be seen that, within a reasonable admissible range, a larger stepsize can yield a larger cost decrease than the more conservative update produced by a smaller one.
This observation motivates the stepsize enlargement
mechanism. We next examine its effect on
the convergence properties of the proposed algorithm.

\begin{lemma}[Stabilization of the adaptive stepsize]
	\label{lem:panda-finite-gamma}
	Suppose that Assumptions~1 and~2 hold. Then the following statements hold.
	\vspace{-2mm}
	\begin{enumerate}
		\item For each outer iteration \(k\), the backtracking loop on \(\gamma_k\) and \(\tau_k\) terminate finitely.
		\item If, in addition, \(\varphi\) is level bounded, then \((\gamma_k)\) is asymptotically constant, namely, there exist \(\bar\gamma>0\) and \(k_0\in\mathbb N\) such that
		$
		\gamma_k=\bar\gamma, \forall k\ge k_0.
		$
	\end{enumerate}
\end{lemma}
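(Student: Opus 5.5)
We prove the two statements separately. Both rest on one fact: the upper-model condition \eqref{eq:panda_upper_model} holds automatically whenever $\gamma$ is small relative to a \emph{local} Lipschitz modulus of $\nabla\ell$.

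\textbf{Finite termination.} Fix an outer iteration $k$. We treat the loops in the order they occur.

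The enlargement loop (Steps~6--9) doubles $\gamma_k$ at each pass and breaks once $\gamma_k>\gamma_{\max}$. It therefore runs at most $\lceil\log_2(\gamma_{\max}/\gamma_{k-1})\rceil+1$ times.

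For the $\gamma$-backtracking (Steps~14--17), every trial point is built from $u^{k-1}$, $\bar u^{k-1}$ and $d^k$ with $\tau_k\in(0,1]$. By the bound $\|d^k\|\le D\|\bar u^{k-1}-u^{k-1}\|$, all trial points $u^k$ lie in a fixed ball $B$ around $u^{k-1}$.

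We also need to control $\bar u^k\in T_{\gamma}(u^k)$. Prox-boundedness (A2) together with local boundedness of $\operatorname{prox}_{\gamma g}$ for $\gamma\le\gamma_k$ bounded away from $\gamma_g$ gives a bounded set $B'\supseteq B$ containing every such $\bar u^k$. This is the Rockafellar--Wets result on uniform local boundedness of the proximal map.

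On $B'$, assumption A1 yields a Lipschitz constant $L_{B'}$ for $\nabla\ell$. The descent lemma then gives
\[
\ell(\bar u)\le \ell(u)+\langle\nabla\ell(u),\bar u-u\rangle+\tfrac{L_{B'}}{2}\|\bar u-u\|^2 ,
\]
so \eqref{eq:panda_upper_model} holds as soon as $\gamma\le\alpha/L_{B'}$. Halving therefore stops after finitely many steps.

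For the $\tau$-backtracking (Steps~18--21), we use that $\varphi_\gamma$ is continuous. As $\tau_k\to0$ the trial point tends to $\bar u^{k-1}$. Combining Lemma~2.2(i)--(ii) with the now-valid model \eqref{eq:panda_upper_model}, which replaces $\gamma L_\ell$ by $\alpha$, gives
\[
\varphi_{\gamma_k}(\bar u^{k-1})\le\varphi(\bar u^{k-1})\le\Phi_{k-1}-\tfrac{\gamma_{k-1}(1-\alpha)}{2}\|R_{\gamma_{k-1}}(u^{k-1})\|^2 .
\]
Since $\beta<1$, this is strictly below the threshold whenever the residual is nonzero. By continuity, a small enough $\tau_k$ is accepted; if the residual is zero, the point is already critical and we stop.

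One subtlety must be handled. Halving $\gamma$ changes the merit function from $\varphi_{\gamma_{k-1}}$ to $\varphi_{\gamma_k}$. We need $\varphi_{\gamma_k}(\bar u^{k-1})\le\varphi(\bar u^{k-1})$ to hold for the new $\gamma_k$, and Lemma~2.2(i) gives exactly this for any $\gamma$.

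\textbf{Asymptotic constancy.} The accepted iterations satisfy
\[
\Phi_k\le\Phi_{k-1}-\beta\tfrac{\gamma_{k-1}(1-\alpha)}{2}\|R\|^2 ,
\]
so $\Phi_k\le\Phi_0$ for all $k$. By Lemma~2.2(ii), $\varphi(\bar u^k)\le\Phi_k\le\Phi_0$. Level boundedness of $\varphi$ then confines all $\bar u^k$ to a compact set. Using $\|d^k\|\le D\|\bar u^{k-1}-u^{k-1}\|$ and the bound $\|u-\bar u\|^2\le 2\gamma(\varphi_\gamma(u)-\varphi(\bar u))/(1-\alpha)$, the points $u^k$, $\bar u^k$ and $\hat u^k$ all lie in one compact set $K$, uniformly in $k$. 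Let $L_K$ be the Lipschitz modulus of $\nabla\ell$ on a neighborhood of $K$.

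From this we read off two facts. First, \eqref{eq:panda_upper_model} never fails for $\gamma\le\alpha/L_K$, so $\gamma_k\ge\gamma_{\min}:=\min\{\gamma_0,\alpha/(2L_K)\}$. Second, $\gamma_k\le 2\gamma_{\max}$. Hence $\gamma_k$ ranges over a finite set of values of the form $\gamma_0 2^j$.

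It remains to rule out oscillation, and this is the main obstacle: nothing so far prevents enlargement and reduction from alternating forever. We would argue as follows. Summing the decrease inequality and using that $\Phi_k$ is bounded below by $\inf\varphi>-\infty$ gives $\sum_k\gamma_{k-1}\|R_{\gamma_{k-1}}(u^{k-1})\|^2<\infty$. Because $\gamma_k\ge\gamma_{\min}$, this forces $\|R_{\gamma_{k-1}}(u^{k-1})\|\to0$. Hence, after finitely many iterations, $\|R\|\le\varepsilon_r$, and the enlargement test in Step~4 is never triggered again.

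From then on $\gamma_k$ is nonincreasing and takes values in a finite set. It therefore changes only finitely often, which yields $\bar\gamma$ and $k_0$.

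The delicate point is the uniform compactness of $K$, which depends on $\gamma$. We would control it with the monotone bound $\Phi_k\le\Phi_0$. For the trial points rejected during backtracking, we would bound their distance from $u^{k-1}$ directly through $D$ and $\|\bar u^{k-1}-u^{k-1}\|\le\sqrt{2\gamma_{\max}\Phi_0'}$, where $\Phi_0':=(\Phi_0-\inf\varphi)/(1-\alpha)$.
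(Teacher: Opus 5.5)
Your proposal is correct and follows essentially the same route as the paper. Local Lipschitz continuity of $\nabla\ell$ on a bounded set containing all trial points makes \eqref{eq:panda_upper_model} automatic for small $\gamma$, and Lemma~2.2(i)--(ii) plus continuity of the FBE gives finite $\tau$-backtracking. Telescoping the Step~18 test drives the residual to zero so enlargement stops, and level boundedness gives a uniform lower bound on $\gamma_k$, so $\gamma_k$ is eventually constant. Your ordering, which secures $\inf\varphi>-\infty$ and $\gamma_k\ge\gamma_{\min}$ before concluding $R\to0$, is tighter than the paper's, which asserts $R_{\gamma_k}(u^k)\to0$ before it has either bound.

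One small point: $\gamma_k\le 2\gamma_{\max}$ does not follow literally from Steps~6--9. An enlargement phase that starts with $\gamma_{k-1}>\gamma_{\max}$ can still double once before the break in Step~8. Each such phase occurs at an iteration with $\|R_{\gamma_{k-1}}(u^{k-1})\|>\varepsilon_r$ and $\gamma_{k-1}>\gamma_{\max}$, so it lowers $\Phi$ by at least $\beta\gamma_{\max}(1-\alpha)\varepsilon_r^2/2$. Since $\Phi_k\ge\inf\varphi>-\infty$, there are only finitely many such phases. This restores an upper bound on $\gamma_k$, which is all your compactness argument needs.
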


\vspace{-4mm}
\begin{pf}
	For a fixed outer iteration \(k\), the trial point generated at Step~10 satisfies
	$
	\|u^k-\bar u^{k-1}\|
	\le
	(1+D)\|u^{k-1}-\bar u^{k-1}\|.
	$
	Therefore, all trial points generated within the $k$-th outer iteration remain in a bounded neighborhood, and by \cite[Lemma~5.1]{de2022proximal}, $\bar u^{k-1}$ is also bounded. Since \(\nabla \ell\) is locally Lipschitz continuous on the above bounded set, there exists \(L_\ell^{\mathrm{loc}}>0\) such that the upper-model condition \eqref{eq:panda_upper_model} holds whenever \(\gamma_k\le \alpha/L_\ell^{\mathrm{loc}}\). Hence, the loop in Steps~11-16 terminates finitely.
	For the loop in Steps~12-20, finite termination follows directly from Lemma~2.2(i-ii).
	Moreover, by the acceptance condition at Step~18,
	$
	\Phi_k
	\le
	\Phi_{k-1}
	-
	\beta\frac{\gamma_{k-1}(1-\alpha)}{2}\|R_{\gamma_{k-1}}(u^{k-1})\|^2.
	$
	Telescoping this inequality, we obtain
	$
	R_{\gamma_k}(u^k)\to0.
	$
	Hence, the enlargement trigger condition eventually fails, so no further enlargement is activated after finitely many iterations.
	Finally, if \(\varphi\) is level bounded, then \((\bar u^k)\) is bounded since \(\varphi(\bar u^k)\le \Phi_0\). Together with the boundedness argument above, this implies that all sufficiently late trial points and iterates are contained in a common bounded set \(\Omega\). Hence, there exists \(L_{\Omega}^{\mathrm{loc}}>0\) such that \eqref{eq:panda_upper_model} holds on \(\Omega\) whenever \(\gamma_k\le \alpha/L_{\Omega}^{\mathrm{loc}}\). Therefore, \(\gamma_k\) cannot be reduced indefinitely and eventually becomes constant.
\end{pf}
\begin{theorem}[Convergence of PANDA]
	\label{thm:panda-convergence}
	Let \((u^k)\) be the sequence generated by PANDA. Suppose that Assumptions~1 and~2 hold and that \(\varphi\) is level bounded. Then the following statements hold.
	\vspace{-2mm}
	\begin{enumerate}
		\item \((u^k)\) converges \(R\)-linearly to \(u^\star\).
		\item If, in addition, the Dennis--Mor\'e condition holds, then \((u^k)\) converges superlinearly to \(u^\star\).
	\end{enumerate}
\end{theorem}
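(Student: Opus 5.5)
The plan is to reduce Theorem~\ref{thm:panda-convergence} to the known convergence theory for averaged proximal quasi-Newton methods with a \emph{fixed} stepsize, in the spirit of \cite{themelis2018forward,de2022proximal}. The reduction uses the stabilization result, Lemma~\ref{lem:panda-finite-gamma}. Since \(\varphi\) is level bounded, item~2 of that lemma gives \(k_0\) and \(\bar\gamma>0\) with \(\gamma_k=\bar\gamma\) for all \(k\ge k_0\). For \(k\ge k_0\), the upper-model condition \eqref{eq:panda_upper_model} then holds at every accepted pair \((u^k,\bar u^k,\bar\gamma)\). As a result, Lemma~2.2(ii) holds along the iterates with \(\gamma L_\ell\) replaced by \(\alpha\), and the tail of the algorithm is exactly the fixed-stepsize scheme with merit function \(\varphi_{\bar\gamma}\).

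\textbf{Step 1: accumulation points are critical.} From the acceptance test at Step~18, for \(k\ge k_0\),
\[
\varphi_{\bar\gamma}(u^{k})\le \varphi_{\bar\gamma}(u^{k-1})-\beta\tfrac{\bar\gamma(1-\alpha)}{2}\|R_{\bar\gamma}(u^{k-1})\|^2 .
\]
Since \(\varphi_{\bar\gamma}\ge\varphi(\bar u)\) is bounded below by level boundedness, the residuals are square-summable. Boundedness of \((\bar u^k)\) and the bound \(\|u^k-\bar u^{k-1}\|\le(1+D)\|u^{k-1}-\bar u^{k-1}\|\) then imply that \((u^k)\) is bounded. Every accumulation point \(u^\star\) is \(\bar\gamma\)-critical by outer semicontinuity of \(T_{\bar\gamma}\), and \(\varphi_{\bar\gamma}(u^k)\) converges to \(\varphi(u^\star)\).

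\textbf{Step 2: local quadratic growth and isolation.} By Assumption~\ref{ass:local} and \eqref{eq:second}, \(\varphi_{\bar\gamma}\) is \(C^2\) near \(u^\star\) with \(H_{\bar\gamma}\succ0\), since \(\bar\gamma\le\alpha/L_\ell^{\rm loc}<1/L_\ell^{\rm loc}\). Hence \(u^\star\) is an isolated critical point, and locally
\[
\varphi_{\bar\gamma}(u)-\varphi(u^\star)\le c_1\|R_{\bar\gamma}(u)\|^2 ,
\]
with \(\|u-u^\star\|\le c_2\|R_{\bar\gamma}(u)\|\) because \(\nabla_uR_{\bar\gamma}\) is nonsingular. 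Since \(\|u^{k}-u^{k-1}\|=O(\|R_{\bar\gamma}(u^{k-1})\|)\to0\), the set of accumulation points is connected. Combined with isolation, this shows that the whole sequence converges to \(u^\star\).

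\textbf{Step 3: linear rate.} Inserting the local error bound into the sufficient-decrease inequality gives
\[
\varphi_{\bar\gamma}(u^k)-\varphi(u^\star)\le \rho\bigl(\varphi_{\bar\gamma}(u^{k-1})-\varphi(u^\star)\bigr)\quad\text{with }\rho<1 .
\]
So the FBE gap converges Q-linearly. Quadratic growth of \(\varphi_{\bar\gamma}\) around \(u^\star\), which follows from \(H_{\bar\gamma}\succ0\), transfers this to \(\|u^k-u^\star\|\le c\sqrt{\rho}^{\,k}\), i.e., R-linear convergence.

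\textbf{Step 4: superlinear rate.} I expect this to be the main obstacle. It amounts to showing that \(\tau_k=1\) is eventually accepted, so that \(u^k=u^{k-1}+d^k\) with \(d^k=-H_kR_{\bar\gamma}(u^{k-1})\). The Dennis--Mor\'e condition
\[
\|(H_k^{-1}-\nabla_uR_{\bar\gamma}(u^\star))d^k\|=o(\|d^k\|)
\]
then gives \(\|u^{k-1}+d^k-u^\star\|=o(\|u^{k-1}-u^\star\|)\) by the standard Dennis--Mor\'e argument applied to the strictly differentiable map \(R_{\bar\gamma}\). To show acceptance of the unit step, I would expand \(\varphi_{\bar\gamma}\) to second order at \(u^\star\) and compare \(\varphi_{\bar\gamma}(u^{k-1}+d^k)-\varphi(u^\star)=o(\|u^{k-1}-u^\star\|^2)\) against the required decrease \(\beta\tfrac{\bar\gamma(1-\alpha)}{2}\|R\|^2\asymp\|u^{k-1}-u^\star\|^2\). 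Two further points need care here. First, the \(\gamma\)-backtracking at Step~14 must not be retriggered, which Lemma~\ref{lem:panda-finite-gamma} rules out for \(k\ge k_0\). Second, the enlargement branch must be inactive, which holds once \(\|R\|\le\varepsilon_r\).
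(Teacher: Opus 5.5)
Your argument is correct at the paper's level of rigor, but it takes a different route. The paper's proof is a pure reduction. After Lemma~\ref{lem:panda-finite-gamma} freezes \(\gamma_k=\bar\gamma\), it only shows that \(H_{\bar\gamma}\succ0\) gives the KL inequality for \(\varphi_{\bar\gamma}\) at \(u^\star\) with exponent \(1/2\). It then hands whole-sequence convergence, the \(R\)-linear rate and the superlinear rate (including eventual acceptance of \(\tau_k=1\)) to Thms.~5.8--5.10 of \cite{themelis2018forward}. You re-derive those steps by hand. Your error bound \(\varphi_{\bar\gamma}(u)-\varphi(u^\star)\le c_1\|R_{\bar\gamma}(u)\|^2\) is the same KL(\(1/2\)) inequality in disguise, since \(\nabla\varphi_{\bar\gamma}(u)=(I-\bar\gamma\nabla^2\ell(u))R_{\bar\gamma}(u)\) with a nonsingular factor near \(u^\star\). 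You get convergence of the full sequence from isolation of \(u^\star\) (nonsingular \(\nabla_uR_{\bar\gamma}\)) plus connectedness of the limit set, rather than from a KL finite-length argument. You get the rate from a \(Q\)-linear contraction of the FBE gap, transferred to the iterates by quadratic growth. Your version is self-contained and gives an explicit factor \(1-\beta\bar\gamma(1-\alpha)/(2c_1)\). The paper's version is shorter and produces the constant \(C_\gamma\) it reuses in Theorem~\ref{thm:panda-gamma-nonmonotone}.

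Two points need tightening.

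\emph{The bound on \(\bar\gamma\).} The inequality \(\bar\gamma\le\alpha/L_\ell^{\mathrm{loc}}\) is not supplied by Lemma~\ref{lem:panda-finite-gamma}, which only keeps \(\bar\gamma\) away from zero. The enlargement loop may double \(\gamma\) past \(\alpha/L_\ell^{\mathrm{loc}}\). Moreover, \eqref{eq:panda_upper_model} only certifies the curvature of \(\ell\) along the tested displacements \(\bar u^k-u^k\), not along every direction of \(\nabla^2\ell(u^\star)\). What you actually need for \(H_{\bar\gamma}\succ0\) is \(I-\bar\gamma\nabla^2\ell(u^\star)\succ0\), and you should state this as a hypothesis. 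The paper makes the same tacit identification when it invokes \eqref{eq:second}. Relatedly, under \textbf{B1}--\textbf{B2} the FBE is twice differentiable at \(u^\star\) rather than \(C^2\) near it; this is harmless, because you only use the expansion at \(u^\star\).

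\emph{Unit-step acceptance in Step~4.} The right comparison is not between \(o(\|u^{k-1}-u^\star\|^2)\) and the required decrease alone. It is between \(\varphi_{\bar\gamma}(u^{k-1}+d^k)-\varphi(u^\star)=o(\|u^{k-1}-u^\star\|^2)\) and \(\varphi_{\bar\gamma}(u^{k-1})-\varphi(u^\star)-\beta\frac{\bar\gamma(1-\alpha)}{2}\|R_{\bar\gamma}(u^{k-1})\|^2\). The latter is at least \((1-\beta)\frac{\bar\gamma(1-\alpha)}{2}\|R_{\bar\gamma}(u^{k-1})\|^2\). This follows from Lemma~2.2(ii) with \(\alpha\) in place of \(\gamma L_\ell\), together with \(\varphi(\bar u^{k-1})\ge\varphi(u^\star)\) by local minimality. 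Your error bound then makes it \(\gtrsim\|u^{k-1}-u^\star\|^2\). This is exactly where \(\beta<1\) is used.
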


\vspace{-4mm}
\begin{pf}
	By Lemma~\ref{lem:panda-finite-gamma}, the stepsize sequence \((\gamma_k)\) becomes constant after finitely many iterations; denote its limiting value by \(\bar\gamma\). Therefore, once the objective satisfies the Kurdyka–\L{}ojasiewicz (KL) property at the limit point, the linear and superlinear convergence analysis in \cite[Thms.~5.8--5.10]{themelis2018forward} applies.
	We next show that strong local minimality implies the KL inequality with exponent \(1/2\). Indeed, by \eqref{eq:second}, one has
	$
	H_{\bar\gamma}\succ0.
	$
	Hence, for \(u\) close to \(u^\star\), letting \(d:=u-u^\star\), the expansions of \(\varphi_{\bar\gamma}\) and \(\nabla\varphi_{\bar\gamma}\) at \(u^\star\) yield \(\varphi_{\bar\gamma}(u)-\varphi_{\bar\gamma}(u^\star)=\frac12 d^\top H_{\bar\gamma}d+o(\|d\|^2)\) and \(\nabla\varphi_{\bar\gamma}(u)=H_{\bar\gamma}d+o(\|d\|)\).
	Therefore,
	$
	\frac{\bigl(\varphi_{\bar\gamma}(u)-\varphi_{\bar\gamma}(u^\star)\bigr)^{1/2}}
	{\|\nabla\varphi_{\bar\gamma}(u)\|}
	\le
	\frac{1}{\sqrt{2\,\lambda_{\min}(H_{\bar\gamma})}}
	$
	for all \(u\) sufficiently close to \(u^\star\). This proves that \(\varphi_{\bar\gamma}\) satisfies the KL inequality at \(u^\star\) with desingularizing function \(\psi(s)=\rho s^{1/2}\), where
	$
	\rho=\frac{\sqrt{2}}{\sqrt{\lambda_{\min}(H_{\bar\gamma})}}.
	$
	\hfill\(\square\)
\end{pf}
\begin{theorem}[Effect of $\gamma$ on the asymptotic rate]
	\label{thm:panda-gamma-nonmonotone}
	Under the same assumptions as in Theorem~3.3, the derived upper bound on the asymptotic linear convergence constant depends non-monotonically on $\gamma$. The bound deteriorates as $\gamma\to 1/L^{\mathrm{loc}}_\ell$; it also deteriorates as $\gamma\to0$ when the active manifold is positive-dimensional.
\end{theorem}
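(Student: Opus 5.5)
We make the constant in the linear-rate analysis explicit as a function of $\gamma$ and then study its two endpoint limits. We use the KL-based argument from the proof of Theorem~\ref{thm:panda-convergence}, invoking \cite[Thms.~5.8--5.10]{themelis2018forward}. With exponent $1/2$, the asymptotic linear constant is bounded by a quantity of the form $q(\gamma)=\sqrt{1-c(\gamma)}$. Here $c(\gamma)$ is proportional to the sufficient-decrease coefficient divided by the product of two constants. The first is the KL constant $\rho^2\propto 1/\lambda_{\min}(H_\gamma)$. The second is the squared constant bounding $\|\nabla\varphi_\gamma(u)\|$ by $\|R_\gamma(u)\|$. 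The decrease coefficient is $\beta\gamma(1-\alpha)/2$, i.e.\ $\gamma(1-\gamma L_\ell)/2$ in the form of Lemma~2.2(ii). Locally $\nabla\varphi_\gamma(u)=Q_\gamma R_\gamma(u)+o(\|R_\gamma(u)\|)$, so $\|\nabla\varphi_\gamma\|\le\|Q_\gamma\|\,\|R_\gamma\|\approx(1+\gamma\mu)\|R_\gamma\|$, where $\mu$ bounds $|\lambda(\nabla^2\ell(u^*))|$. Combining these gives, up to $\gamma$-independent factors,
\[
c(\gamma)\;\gtrsim\;\frac{\gamma(1-\gamma L^{\mathrm{loc}}_\ell)\,\lambda_{\min}(H_\gamma)}{\|Q_\gamma\|^2}.
\]
The theorem then amounts to showing that this lower bound on $c(\gamma)$ vanishes at both ends of $(0,1/L^{\mathrm{loc}}_\ell)$. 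This makes the upper bound $q(\gamma)$ tend to $1$ at both ends while staying below $1$ in the interior, which is the claimed non-monotonicity.

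\emph{Endpoint $\gamma\to 1/L^{\mathrm{loc}}_\ell$.} Here the factor $1-\gamma L^{\mathrm{loc}}_\ell\to0$. Meanwhile $\lambda_{\min}(H_\gamma)$ and $\|Q_\gamma\|$ stay bounded, since $Q_\gamma$, $P_\gamma$ and $\nabla_uR_\gamma$ are continuous in $\gamma$ on compact subintervals. Hence $c(\gamma)\to0$ and $q(\gamma)\to1$.

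\emph{Endpoint $\gamma\to0$.} This is the main obstacle. The prefactor $\gamma$ tends to $0$, but we must check whether $\lambda_{\min}(H_\gamma)$ compensates by blowing up like $1/\gamma$. Under \textbf{B2}, $P_\gamma$ is the projector onto the tangent space $\mathcal T$ of the active manifold plus $O(\gamma)$ corrections; the corrections come from the curvature of $g$ along $\mathcal M$, i.e.\ $P_\gamma=\Pi_{\mathcal T}(I+\gamma\nabla^2_{\mathcal M}g)^{-1}\Pi_{\mathcal T}$. Substituting into $H_\gamma=Q_\gamma(I-P_\gamma Q_\gamma)/\gamma$ splits $H_\gamma$ into blocks:
\begin{itemize}
\item on the normal space $\mathcal T^\perp$, $H_\gamma\approx I/\gamma$, which is large;
\item on $\mathcal T$, $H_\gamma\to\Pi_{\mathcal T}\bigl(\nabla^2\ell(u^*)+\nabla^2_{\mathcal M}g\bigr)\Pi_{\mathcal T}$, which is $O(1)$ and positive definite by \textbf{B3}.
\end{itemize}
When $\dim\mathcal M>0$ the tangent block is nontrivial, so $\lambda_{\min}(H_\gamma)=O(1)$. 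Therefore $c(\gamma)=O(\gamma)\to0$ and again $q(\gamma)\to1$. If instead $\dim\mathcal M=0$, we have $\mathcal T=\{0\}$, so $\lambda_{\min}(H_\gamma)\sim1/\gamma$ and $c(\gamma)$ stays bounded away from $0$. This explains the dimension hypothesis in the statement.

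The delicate part is making the block expansion of $P_\gamma$ rigorous, including the uniform $O(\gamma)$ control of the off-diagonal coupling. We would cite the partial-smoothness characterization of $J\operatorname{prox}_{\gamma g}$ underlying \cite[Thms.~4.10--4.11]{themelis2018forward}. Then a Schur-complement argument bounds $\lambda_{\min}(H_\gamma)$ above by the Rayleigh quotient on any unit $v\in\mathcal T$, which suffices for the claim and avoids needing the exact spectrum.
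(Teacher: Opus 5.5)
Your proposal is correct. At the $\gamma\to0$ end it is the paper's argument. Your $1/c(\gamma)$ is the paper's $C_\gamma=\rho^2(1+\gamma L^{\mathrm{loc}}_\ell)^2/\sigma$ with $\rho^2\propto1/\lambda_{\min}(H_\gamma)$. The paper likewise uses $P_\gamma=\Pi_S(I+\gamma M)^{-1}\Pi_S$ to keep $\lambda_{\min}(H_\gamma)$ bounded along tangential directions, so that $\sigma\propto\gamma$ drives the bound to $1$.

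At the $\gamma\to1/L^{\mathrm{loc}}_\ell$ end you use a different mechanism. The paper places the degeneration in the KL modulus. It writes $H_\gamma=\frac1\gamma Q_\gamma^{1/2}(I-Q_\gamma^{1/2}P_\gamma Q_\gamma^{1/2})Q_\gamma^{1/2}$ and argues that $\lambda_{\min}(Q_\gamma)$ vanishes like $1-\gamma L^{\mathrm{loc}}_\ell$, hence $\lambda_{\min}(H_\gamma)\to0$ and $\rho\to\infty$. You need only that $\lambda_{\min}(H_\gamma)$ is bounded above, and you let the decrease coefficient $\gamma(1-\gamma L^{\mathrm{loc}}_\ell)/2$ vanish (equivalently $1-\alpha\to0$ with $\gamma=\alpha/L_\ell$).

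Your version avoids a tacit hypothesis in the paper's, namely that $L^{\mathrm{loc}}_\ell$ coincides with $\lambda_{\max}(\nabla^2\ell(u^\star))$. That hypothesis can fail, e.g., if $\nabla^2\ell(u^\star)$ has a negative eigenvalue of larger magnitude along a normal direction; then $\lambda_{\min}(Q_\gamma)$ stays bounded away from $0$. The paper's version, conversely, does not depend on how $\sigma$ is parametrized: it treats $\sigma\propto\gamma$ throughout and exhibits a genuine flattening of the FBE itself. Both factors enter $C_\gamma$ multiplicatively, so either mechanism suffices.

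Two minor points. First, the estimate $\lambda_{\min}(H_\gamma)\le v^\top H_\gamma v$ for unit $v\in\mathcal T$ is just Courant--Fischer, so no Schur-complement argument is needed. Second, in the denominator of $c(\gamma)$ you should keep the bound $(1+\gamma\mu)^2$ rather than $\|Q_\gamma\|^2$ itself; saying $\|Q_\gamma\|$ "stays bounded" is the wrong direction for a denominator. For instance, when $\nabla^2\ell(u^\star)=L_\ell I$ one has $\|Q_\gamma\|=1-\gamma L_\ell$ and $\lambda_{\min}(H_\gamma)\approx(1-\gamma L_\ell)L_\ell$. In that case the literal ratio with $\|Q_\gamma\|^2$ tends to $1$ rather than $0$.
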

\vspace{-4mm}
\begin{pf}
	Define
	$
	B_k:=\sum_{i\ge k}\|r^i\|.
	$
	By \cite[Thm.~5.9]{themelis2018forward}, the \(R\)-linear convergence of \((u^k)\) is equivalent to the \(Q\)-linear convergence of \((B_k)\). Moreover, using the KL inequality together with Lemma~2.2 as in that proof, one obtains
	$
	B_{k+1}\le\Bigl(1-\frac{1}{C_\gamma}\Bigr)B_k
	$
	with
	$
	C_\gamma:=\frac{\rho^2(1+\gamma L_\ell^{\mathrm{loc}})^2}{\sigma}.
	$
	Therefore, it suffices to examine how \(C_\gamma\) depends on \(\gamma\).
	By further expanding \eqref{eq:second}, one has
	$
	H_\gamma=\frac1\gamma Q_\gamma^{1/2}(I-Q_\gamma^{1/2}P_\gamma Q_\gamma^{1/2})Q_\gamma^{1/2}.
	$
	As \(\gamma\to 1/L_\ell^{\mathrm{loc}}\), the smallest eigenvalue of \(Q_\gamma\) tends to zero through the factor \(1-\gamma L_\ell^{\mathrm{loc}}\). Since \(P_\gamma\) remains locally bounded under \textbf{B2}, the \(I-Q_\gamma^{1/2}P_\gamma Q_\gamma^{1/2}\) term stays bounded. Hence \(\lambda_{\min}(H_\gamma)\to0\), which implies \(\rho\to+\infty\) and thus \(C_\gamma\to+\infty\).
	On the other hand, as \(\gamma\to0\), we may rewrite
	$
	H_\gamma=Q_\gamma\bigl((I-P_\gamma)/\gamma+P_\gamma\nabla^2\ell(u^\star)\bigr).
	$
	If the identified active manifold is positive-dimensional, then \(P_\gamma\) admits the local representation
	$
	P_\gamma=\Pi_S(I+\gamma M)^{-1}\Pi_S
	$
	with \(S\neq\{0\}\). Hence, along directions in \(S\), the eigenvalues of \((I-P_\gamma)/\gamma\) converge to finite constants, so \(\lambda_{\min}(H_\gamma)\) remains bounded as \(\gamma\to0\). Therefore, the dominant term in \(C_\gamma\) is \(\sigma^{-1}\). Since \(\sigma\propto\gamma\), it follows that \(C_\gamma\to+\infty\) as \(\gamma\to0\).
	\hfill\(\square\)
\end{pf}
\vspace{-4mm}
With the above considerations, in typical NMPC problems, the optimal solution is usually not fully fixed by the active constraints, but still retains feasible tangential directions along the active manifold. In such cases, the stepsize enlargement mechanism in our algorithm, together with the safeguard parameter \(\alpha\), helps keep \(\gamma\) within a favorable range for the local asymptotic rate.

\vspace{-2mm}
The above discussion concerns the local convergence behavior near the optimum. Away from the optimum, \(\gamma\) also has several concrete algorithmic effects: a larger admissible \(\gamma\) makes the FBE acceptance condition in Step~18 more restrictive, which helps reject poor quasi-Newton trial steps and improves line-search robustness; moreover, \(\gamma\) enters the local quasi-Newton model through the Jacobian of the proximal mapping at the FBS point, and when \(g\) is an indicator function of constraints, this may help the search direction detect the active boundary earlier. On the other hand, changing \(\gamma\) is not cost-free: when L-BFGS is used, the accumulated secant information should be cleared after each update of \(\gamma\), making the choice of \(N_\gamma\) important in practice.

\subsection{Backward pass of PANDA}
\vspace{-2mm}
In this subsection, we introduce an iterative method for computing the parameter
gradient in the backward pass, while also exploiting the computational
simplifications induced by common constraint structures. 

Specifically, in view of \eqref{eq:bp-chain} and \eqref{eq:du-dtheta}, we adopt an adjoint-based formulation instead of explicitly forming the Jacobian \(\partial u^*(\theta)/\partial \theta\).
Let $\lambda$ be the solution of
\begin{equation}
	\bigl(\nabla_u R_{\gamma,\theta}(u^*)\bigr)^\top \lambda
	=
	\nabla_u L(u^*)^\top,
	\label{eq:bp-adjoint-system}
\end{equation}
then the outer gradient can be written as
\begin{equation}
	\nabla_\theta \mathcal L(\theta)
	=
	-\lambda^\top \nabla_\theta R_{\gamma,\theta}(u^*).
	\label{eq:bp-adjoint-grad}
\end{equation}
Therefore, the backward pass reduces to two steps: solve the linear system \eqref{eq:bp-adjoint-system} and compute the product \eqref{eq:bp-adjoint-grad}.
With the notation of \eqref{eq:second}, and define
$
S_\gamma := J_\theta\operatorname{prox}_{\gamma g}\bigl(u^\star-\gamma\nabla\ell(u^\star)\bigr),
$ the chain rule gives
\vspace{-2mm}
\begin{equation}
	\begin{aligned}
		\nabla_u R_{\gamma,\theta}(u^\star)
		&=
		\frac{1}{\gamma}(I-P_\gamma)+P_\gamma\nabla_{uu}^2\ell(u^\star,\theta), \\
		\nabla_\theta R_{\gamma,\theta}(u^\star)
		&=
		P_\gamma\nabla_{u\theta}^2\ell(u^\star,\theta)
		-\frac{1}{\gamma}S_\gamma .
	\end{aligned}
	\label{eq:bp-JR-JthetaR}
\end{equation}
To enable the matrix-free procedure described next, we assume that \(g\) is either separable or otherwise structured so that the operator evaluations associated with \(P_\gamma\) and \(S_\gamma\) can be implemented efficiently.
\vspace{-2mm}

\textbf{Step 1: Solve the linear system.}
For a general nonsmooth term $g$, the matrix
$\nabla_u R_{\gamma,\theta}(u^\star)$ is in general nonsymmetric, since $P_\gamma$ does
not commute with $\nabla_{uu}^2\ell(u^\star,\theta)$. In this case,
\eqref{eq:bp-adjoint-system} is solved iteratively by GMRES, which can be
implemented in a fully matrix-free manner, because for any vector $z$
\begin{equation}
	\scalebox{0.91}{$
		\displaystyle
		\bigl(\nabla_u R_{\gamma,\theta}(u^\star)\bigr)^\top z
		=
		\frac{1}{\gamma}(I-P_\gamma^\top)z
		+
		\nabla_{uu}^2\ell(u^\star,\theta)\bigl(P_\gamma^\top z\bigr),
		$}
	\label{eq:bp-JRT-z}
\end{equation}
where the second term is a Hessian-Vector product computed via the automatic-differentiation computational graph, so that no explicit matrix is required.

\vspace{-2mm}
A particularly important case is when $g=\delta_{\mathcal U}$ is the indicator
function of box constraints. Then $P_\gamma$ becomes a diagonal $0$-$1$ mask,
$
P_\gamma=\operatorname{diag}(p_1,\dots,p_{N_pn_u}),
p_j\in\{0,1\},
$
which induces the partition
$
\mathcal F:=\{j:p_j=1\},
\mathcal A:=\{j:p_j=0\},
$
corresponding to the free and active variables, respectively. After reordering
the variables, one has
\vspace{-2mm}
\begin{equation}
P_\gamma=
\begin{bmatrix}
	I & 0\\
	0 & 0
\end{bmatrix},
\qquad
\nabla_{uu}^2\ell(u^*,\theta)=
\begin{bmatrix}
	H_{\mathcal {FF}} & H_{\mathcal {FA}}\\
	H_{\mathcal {AF}} & H_{\mathcal {AA}}
\end{bmatrix}.
\end{equation}
Writing \(\lambda=[\lambda_{\mathcal F}^\top,\lambda_{\mathcal A}^\top]^\top\) and partitioning \(\nabla_u L(u^*)\) accordingly, the adjoint system \eqref{eq:bp-adjoint-system} reduces to
\vspace{-2mm}
\begin{equation}
	\begin{bmatrix}
		H_{\mathcal {FF}} & 0\\
		H_{\mathcal {AF}} & \frac{1}{\gamma}I
	\end{bmatrix}
	\begin{bmatrix}
		\lambda_{\mathcal F}\\
		\lambda_{\mathcal A}
	\end{bmatrix}
	=
	\begin{bmatrix}
		\nabla_{u_{\mathcal F}}L(u^*)^\top\\
		\nabla_{u_{\mathcal A}}L(u^*)^\top
	\end{bmatrix}.
	\label{eq:bp-adjoint-block}
\end{equation}
In particular, the first block yields the reduced symmetric system:
\begin{equation}
	H_{\mathcal {FF}}\lambda_{\mathcal F}
	=
	\nabla_{u_{\mathcal F}}L(u^*)^\top,
	\label{eq:bp-adjoint-free}
\end{equation}
which can be solved efficiently by MINRES using Hessian-Vector products only.
The active component \(\lambda_{\mathcal A}\) is then recovered
from the second block as
\vspace{-2mm}
\begin{equation}
	\lambda_{\mathcal A}
	=
	\gamma\Bigl(
	\nabla_{u_{\mathcal A}}L(u^*)^\top
	-
	H_{\mathcal {AF}}\lambda_{\mathcal F}
	\Bigr).
	\label{eq:bp-lambdaA-recovery}
\end{equation}
\textbf{Step 2: Compute the Vector-Jacobian product.}
Once the adjoint variable is obtained, the remaining term in
\eqref{eq:bp-adjoint-grad} is
\begin{equation}
	\nabla_\theta \mathcal L(\theta)
	=
	-\lambda^\top P_\gamma\nabla_{u\theta}^2\ell(u^\star,\theta)
	+\frac{1}{\gamma}\lambda^\top S_\gamma .
	\label{eq:bp-second-term}
\end{equation}
This term can be evaluated by Vector-Jacobian products, without explicitly forming the corresponding full Jacobian matrices.
\vspace{-2mm}

The overall backward pass of \texttt{PANDA} is summarized in
Algorithm~\ref{alg:backward-pass}.

\begin{algorithm}[H]
	\caption{Backward pass of PANDA}
	\label{alg:backward-pass}
	\begin{algorithmic}[1]
		\Statex \textbf{Inputs:} local solution $u^*$ from Algorithm~\ref{alg:panda}, parameter $\theta$, stepsize $\gamma$, stored computation graph
		\Statex \textbf{Output:} outer gradient $\nabla_\theta \mathcal L(\theta)$
		
		\State Evaluate $P=J_u\operatorname{prox}_{\gamma g}(u^*-\gamma\nabla_u \ell(u^*,\theta))$
		
		\If{$g=\delta_{\mathcal U}$ with $\mathcal U$ being box constraints}
		\State Identify the free set $\mathcal F$ and the active set $\mathcal A$
		\State Solve \eqref{eq:bp-adjoint-free} by MINRES to obtain $\lambda_{\mathcal F}$
		\State Evaluate \eqref{eq:bp-lambdaA-recovery} to obtain $\lambda_{\mathcal A}$
		\Else
		\State Solve \eqref{eq:bp-adjoint-system} by GMRES to obtain $\lambda$
		\EndIf
		
		\State Compute $\nabla_\theta \mathcal L(\theta)=-\lambda^\top \nabla_\theta R_{\gamma,\theta}(u^*)$ by \eqref{eq:bp-second-term}
		\State Return $\nabla_\theta \mathcal L(\theta)$
	\end{algorithmic}
\end{algorithm}

\section{Experiments}
\vspace{-2mm}
In this section, to evaluate the proposed method, we consider an imitation learning task built upon a nonconvex trajectory optimization problem for a mobile robot carrying a trailer, and assess its performance from three complementary perspectives: computational time, memory consumption, and learning performance.
\subsection{Problem setup}
\vspace{-2mm}
For the trailer system, we define the state as \(x=[p_x,p_y,\phi]^\top\), where \(p_x\) and \(p_y\) denote the trailer position and \(\phi\) denotes its heading angle, and the control input as \(u=[u_x,u_y]^\top\), where \(u_x\) and \(u_y\) are the planar velocity commands. Using forward Euler discretization, the trailer kinematics can be written as
\begin{equation}
	x_{k+1}
	=
	x_k
	+
	\Delta t
	\begin{bmatrix}
		u_{x,k} + L\sin\phi_k \dot\phi_k \\
		u_{y,k} - L\cos\phi_k \dot\phi_k \\
		\dfrac{u_{y,k}\cos\phi_k-u_{x,k}\sin\phi_k}{L}
	\end{bmatrix}.
\end{equation}
We consider a point-to-point planning problem from an initial state \(x_0\) to a target state \(x_{\mathrm{ref}}\), while avoiding a circular obstacle centered at \(c\in\mathbb{R}^2\). The resulting optimal control problem is formulated as
\vspace{-2mm}
\begin{equation}
	\scalebox{0.95}{$
		\begin{aligned}
			\underset{\{x_k,u_k\}}{\text{min}}\quad
			& \sum_{k=0}^{N-1} (
			\|x_k-x_{\mathrm{ref}}\|_Q^2
			+ \|u_k\|_R^2 \\
			&
			+ \tfrac{1}{2}\eta_{\mathrm{obs}} \max\!\bigl(h(x_k),0\bigr)^2
			)
			+ \|x_N-x_{\mathrm{ref}}\|_W^2. \\
			\text{s.t. }\quad
			& x_0 = \bar x_0, \\
			& x_{k+1} = f(x_k,u_k), \\
			& -u^{\max} \le u_k \le u^{\max}, \quad k=0,\dots,N-1.
		\end{aligned}
		$}
\end{equation}
where \(Q=\mathrm{diag}(q_1,q_2,q_3)\), \(R=\mathrm{diag}(r_1,r_2)\), and \(W=\mathrm{diag}(w_1,w_2,w_3)\). The obstacle penalty is defined as
$
h(x_k)=1-\frac{\|p_k-c\|_2^2}{r_{\mathrm{safe}}^2}, p_k=[p_{x,k},p_{y,k}]^\top,
$
where \(r_{\mathrm{safe}}\) denotes the safety distance. 

For the imitation learning task, the learnable parameter is defined as
\(
\theta=[q_1,q_2,q_3,r_1,r_2,w_1,w_2,w_3,\eta_{\mathrm{obs}},r_{\mathrm{safe}}]^\top
\).
Given a demonstration pair \((x_{0:N}^{\mathrm{tar}},u_{0:N-1}^{\mathrm{tar}})\) generated by the teacher parameter \(\theta_{\mathrm{teacher}}\), the goal is to update an initial parameter vector \(\theta_{\mathrm{student}}\) so that the optimal trajectory induced by \(\theta\) progressively matches the demonstrated one. The outer imitation loss penalizes the deviations of the optimized state and control trajectories from their target counterparts, weighted by \(Q_{\mathrm{IL}}=\mathrm{diag}(10,10,2)\) and \(R_{\mathrm{IL}}=\mathrm{diag}(2,2)\), respectively. Unless otherwise specified, we set  \(N=40\) and \(\Delta t=0.1\). The teacher parameter is chosen as
$
\theta_{\mathrm{teacher}}
=
[0.6,\,0.6,\,0.03,\,0.02,\,0.04,\,4.0,\,4.0,\,0.8,\,100,\,0.7]^\top,
$
and \(\theta_{\mathrm{student}}\) is initialized by adding random perturbations. Training runs for \(50\) epochs with learning rate \(2 \times 10^{-4}\). All experiments are performed on an Intel(R) Core(TM) i5-12600KF CPU at 3.7\,GHz with 32\,GB RAM.

\subsection{Baselines}
\vspace{-2mm}
We compare \texttt{PANDA} with three representative differentiable solvers: SafePDP \cite{jin2021safe}, which computes gradients from the auxiliary-system perspective; CasADi \cite{andersson2018casadi,andersson2018sensitivity}, which employs a primal-dual active-set method in the backward pass; and acados \cite{verschueren2022acados,frey2025differentiable}, which relies on an interior-point sensitivity framework built on HPIPM. In addition, PANOC+\cite{de2022proximal} is included only in the forward pass as an ablation.

\subsection{Validation of adaptive stepsize}
\vspace{-2mm}
\begin{figure}[t] 
	\centering 
	\includegraphics[width=0.48\textwidth,
	trim=1cm 0.2cm 5cm 3.5cm,
	clip
	]{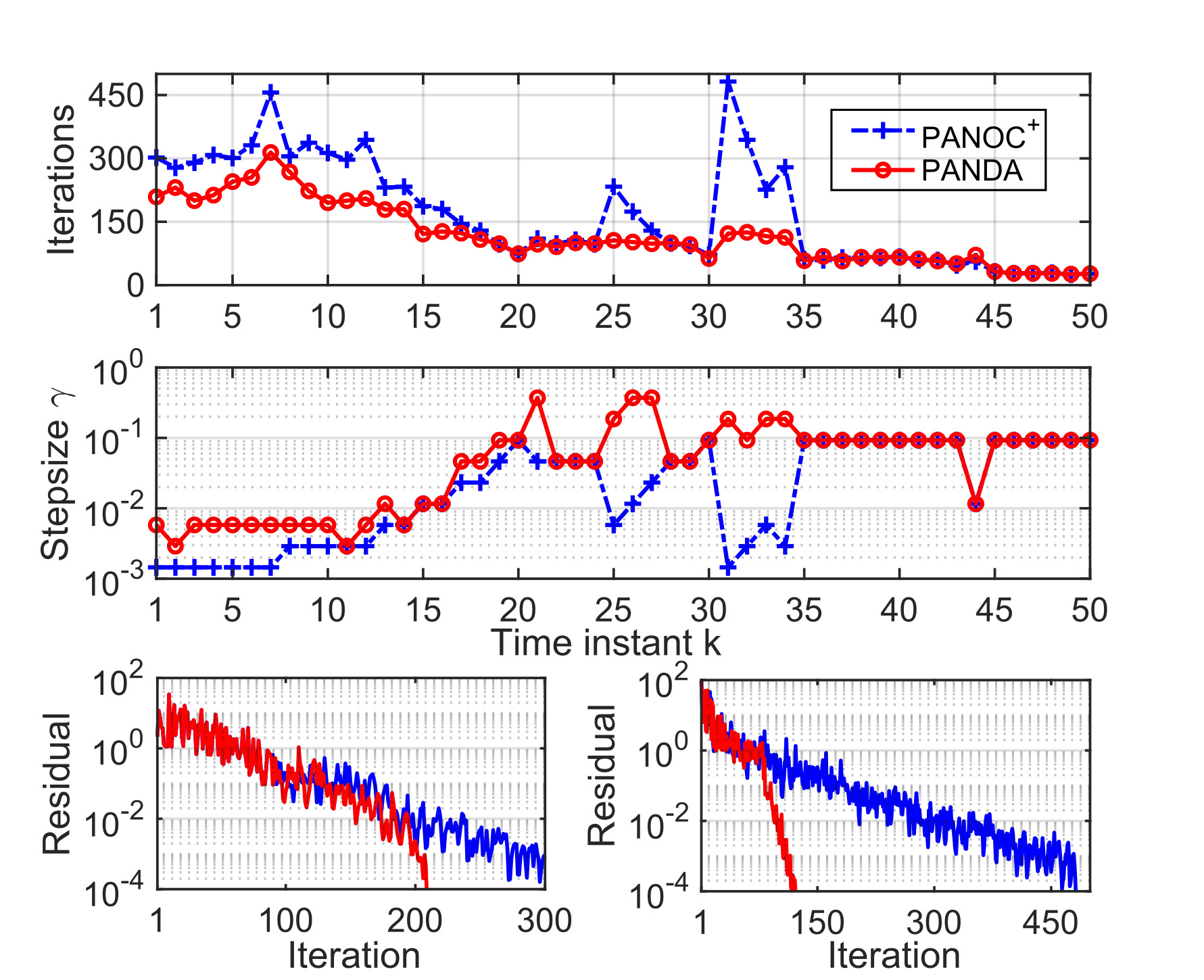} 
	\caption{Comparison between \texttt{PANDA} and PANOC+ on the forward optimization problem, showing iterations and final stepsizes at each time instant, together with residual curves at the initial instant \(k=1\) and the spike instant \(k=31\).}
	\label{fig:panda_panoc_time} 
\end{figure}

With the setting \(N_{\gamma}=80\) and \(\epsilon=10^{-3}\), we solve the above NMPC problem over 50 time instants using \texttt{PANDA} and PANOC+. The results are shown in Fig.~\ref{fig:panda_panoc_time}. It can be seen that \texttt{PANDA} requires fewer iterations than PANOC+, with the improvement being especially pronounced at the spike points around time instants \(31\)--\(34\). The final stepsizes indicate that, at these instants, PANOC+ selects an overly small stepsize compared with the neighboring time instants, e.g., \(\gamma\approx 0.002\) versus \(\gamma\approx 0.09\), resulting in a significant increase in iteration count. By contrast, \texttt{PANDA} keeps \(\gamma\) in a more effective range through adaptive enlargement, thereby avoiding such spikes. The residual curves at the initial and spike instants show that \texttt{PANDA} exhibits a more stable linear convergence behavior. These observations are consistent with the discussion in Section~3.1, where adaptively enlarging the admissible stepsize helps avoid overly conservative updates and improves convergence efficiency.

\subsection{Overall runtime in imitation learning}
\vspace{-2mm}
\begin{figure}[t] 
	\centering 
	\includegraphics[width=0.48\textwidth,
	trim=0.3cm 0.4cm 4cm 2cm,
	clip
	]{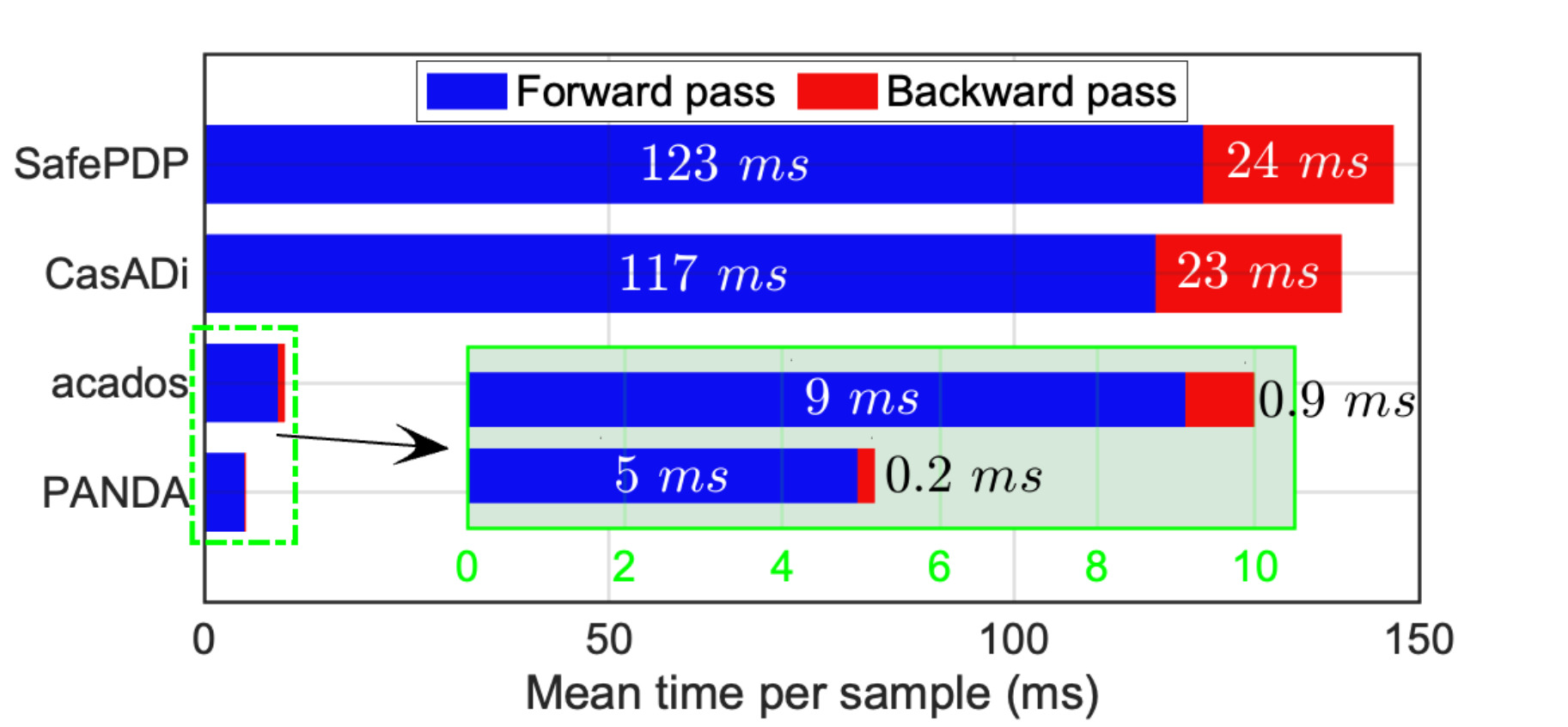} 
	\caption{Average forward pass, backward pass, and total runtime of different methods in the imitation learning task.} \label{fig:runtime_bar} 
\end{figure}

We record the average forward and backward computation times of each method over the imitation learning task, as summarized in Fig.~\ref{fig:runtime_bar}. \texttt{PANDA} outperforms SafePDP and CasADi, and runs faster than acados, achieving approximately 2× speedup in the forward pass, which comes from the combination of several efficient ingredients and the adaptive stepsize enlargement strategy, and 4× speedup in the backward pass, which benefits from the lightweight MINRES solve with an average of 14 iterations.

\subsection{Memory consumption analysis}
\vspace{-2mm}
We record the memory overhead of different methods for the same problem under different prediction horizons, as summarized in Table~\ref{tab:memory}. We exclude the memory used for problem construction and function initialization, and instead compare the buffer memory required for first- and second-order information during the solver iterations.

\begin{table}[h]
	\centering
	\caption{Memory overhead (MB) under different prediction horizons.}
	\label{tab:memory}
	\begin{tabular}{lcccc}
		\hline
		Method & $N=20$ & $N=40$ & $N=60$ & $N=80$ \\
		\hline
		SafePDP & 1.52 & 2.12 & 2.85 & 3.83 \\
		CasADi  & 1.38 & 2.07 & 2.74 & 4.36 \\
		acados  & 0.75 & 1.15 & 1.86 & 3.55 \\
		PANDA   & \textbf{0.28} & \textbf{0.39} & \textbf{0.52} & \textbf{0.65} \\
		\hline
	\end{tabular}
\end{table}

As shown in Table~\ref{tab:memory}, \texttt{PANDA} achieves the lowest memory overhead and its memory overhead grows approximately linearly, while the other methods increase much faster due to explicit matrix construction.
\vspace{-2mm}
\subsection{Learning performance}
\vspace{-2mm}

\begin{figure}[t]
	\centering
	\includegraphics[width=0.48\textwidth,
	trim=10pt 5cm 12pt 5cm,
	clip
	]{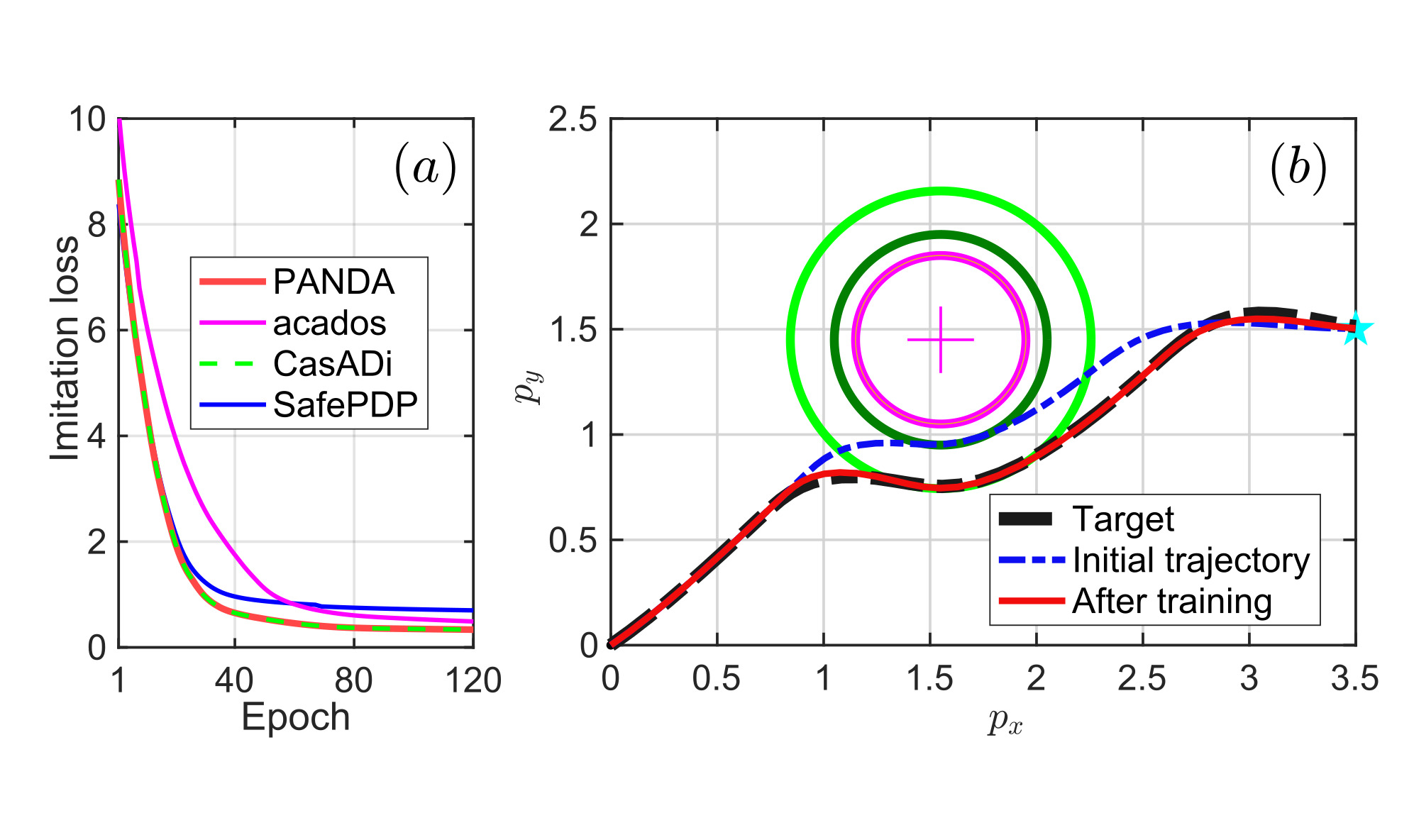}
	\caption{Learning performance of different methods: (a) training loss; (b) trajectory comparison before and after training.}
	\label{fig:loss_curve}
\end{figure}

Finally, the imitation learning performance of the compared methods in 120 epochs is shown in Fig.~\ref{fig:loss_curve}. The loss curves are broadly similar, except for acados, due to its inaccurate handling of non-positive-definite Hessian matrices, a limitation discussed in \cite[Rems.~1 and~2]{frey2025differentiable}. Moreover, as the parameter \(\theta\) is updated during training, the resulting optimal trajectory gradually approaches the demonstration style, which further verifies the effectiveness of the proposed method.
\vspace{-2mm}
\section{Conclusion}
\vspace{-2mm}
In this paper, we proposed \texttt{PANDA}, a fast and matrix-free solver for differentiable optimization. In the forward pass, \texttt{PANDA} combines several efficient techniques and accelerates convergence through an adaptive stepsize enlargement strategy. In the backward pass, it performs matrix-free computation by combining iterative method with Matrix-Vector product operators. The results on a practical nonconvex trailer optimal control problem, in terms of computation time, memory consumption, and imitation performance, demonstrate the effectiveness of the proposed method.
\vspace{-2mm}
\begin{ack}
\vspace{-4mm}
This work was supported in part by the National Natural Science
Foundation of China under Grant 62573209, in part by the Development
and Reform Commission Foundation of Jilin Province under
Grant 2023C034-3.
\end{ack}
\vspace{-4mm}
\bibliographystyle{plain}        
\bibliography{autosam}           

@article{chen2025gauss,
	title={A Gauss-Newton-Induced Structure-Exploiting Algorithm for Differentiable Optimal Control},
	author={Chen, Yuankun and Nie, Zifei and Gong, Xun and Hu, Yunfeng and Chen, Hong},
	journal={arXiv preprint arXiv:2512.19447},
	year={2025}
}

@article{adabag2025differentiable,
	title={Differentiable Model Predictive Control on the GPU},
	author={Adabag, Emre and Greiff, Marcus and Subosits, John and Lew, Thomas},
	journal={arXiv preprint arXiv:2510.06179},
	year={2025}
}

@article{huang2023E2E,
	title={Differentiable integrated motion prediction and planning with learnable cost function for autonomous driving},
	author={Huang, Zhiyu and Liu, Haochen and Wu, Jingda and Lv, Chen},
	journal={IEEE transactions on neural networks and learning systems},
	volume={35},
	number={11},
	pages={15222--15236},
	year={2023},
	publisher={IEEE}
}

@article{andersson2018casadi,
	title={{CasADi}: A Software Framework for Nonlinear Optimization and Optimal Control},
	author={Andersson, Joel A. E. and Gillis, Joris and Horn, Greg and Rawlings, James B. and Diehl, Moritz},
	journal={Mathematical Programming Computation},
	volume={11},
	number={1},
	pages={1--36},
	year={2019},
	doi={10.1007/s12532-018-0139-4},
	publisher={Springer}
}

@article{andersson2018sensitivity,
	title={Sensitivity analysis for nonlinear programming in CasADi},
	author={Andersson, Joel AE and Rawlings, James B},
	journal={IFAC-PapersOnLine},
	volume={51},
	number={20},
	pages={331--336},
	year={2018},
	publisher={Elsevier}
}

@article{verschueren2022acados,
	title={acados—a modular open-source framework for fast embedded optimal control},
	author={Verschueren, Robin and Frison, Gianluca and Kouzoupis, Dimitris and Frey, Jonathan and Duijkeren, Niels van and Zanelli, Andrea and Novoselnik, Branimir and Albin, Thivaharan and Quirynen, Rien and Diehl, Moritz},
	journal={Mathematical Programming Computation},
	volume={14},
	number={1},
	pages={147--183},
	year={2022},
	publisher={Springer}
}

@article{frey2025differentiable,
	title={Differentiable nonlinear model predictive control},
	author={Frey, Jonathan and  Baumgärtner, Katrin and Frison, Gianluca and Reinhardt, Dirk and Hoffmann, Jasper and Fichtner, Leonard and Gros, Sebastien and Diehl, Moritz},
	journal={arXiv preprint arXiv:2505.01353},
	year={2025}
}

@inproceedings{stella2017simple,
	title={A simple and efficient algorithm for nonlinear model predictive control},
	author={Stella, Lorenzo and Themelis, Andreas and Sopasakis, Pantelis and Patrinos, Panagiotis},
	booktitle={2017 IEEE 56th Annual Conference on Decision and Control (CDC)},
	pages={1939--1944},
	year={2017},
	organization={IEEE}
}

@article{themelis2018forward,
	title={Forward-backward envelope for the sum of two nonconvex functions: Further properties and nonmonotone linesearch algorithms},
	author={Themelis, Andreas and Stella, Lorenzo and Patrinos, Panagiotis},
	journal={SIAM Journal on Optimization},
	volume={28},
	number={3},
	pages={2274--2303},
	year={2018},
	publisher={SIAM}
}

@article{jin2021safe,
	title={Safe pontryagin differentiable programming},
	author={Jin, Wanxin and Mou, Shaoshuai and Pappas, George J},
	journal={Advances in Neural Information Processing Systems},
	volume={34},
	pages={16034--16050},
	year={2021}
}

@article{amos2018differentiable,
	title={Differentiable mpc for end-to-end planning and control},
	author={Amos, Brandon and Jimenez, Ivan and Sacks, Jacob and Boots, Byron and Kolter, J Zico},
	journal={Advances in neural information processing systems},
	volume={31},
	year={2018}
}

@book{rockafellar1998variational,
	title={Variational analysis},
	author={Rockafellar, R Tyrrell and Wets, Roger JB},
	year={1998},
	publisher={Springer}
}

@article{de2022proximal,
	title={Proximal gradient algorithms under local Lipschitz gradient continuity: A convergence and robustness analysis of PANOC},
	author={De Marchi, Alberto and Themelis, Andreas},
	journal={Journal of Optimization Theory and Applications},
	volume={194},
	number={3},
	pages={771--794},
	year={2022},
	publisher={Springer}
}



\appendix

\end{document}